\documentclass[sigconf]{acmart}

\pagestyle{plain}

\newcommand{\myparatight}[1]{\smallskip\noindent{\bf {#1}:}~}

\usepackage{amsthm}
\usepackage{amsmath}

\usepackage{amssymb}

\usepackage{booktabs} 
\usepackage{array} 
\usepackage{multirow}
\usepackage{graphics}
\usepackage{graphicx}
\usepackage{algorithm}
\usepackage{algpseudocode}

\usepackage{amsfonts}
\usepackage{color, url}

\usepackage[skip=3pt]{caption}
\usepackage[skip=3pt]{subcaption}

\usepackage{tabto}
\usepackage{xcolor, colortbl}
\usepackage{enumitem}
\usepackage{bbm}
\usepackage{dsfont}

\usepackage{bm}
\usepackage[mathscr]{eucal}

\usepackage{hyperref}

\usepackage{makecell}

\usepackage{bigstrut,multirow,rotating}
\usepackage{placeins}
\usepackage{pgfplots}
\pgfplotsset{compat=1.17}
\setcounter{secnumdepth}{4}
\makeatletter
\@namedef{r@tocindent4}{0pt}
\makeatother

\newtheorem{assumption}{Assumption}

\newtheorem{thm}{Theorem}

\DeclareMathOperator*{\argmax}{arg\,max}

\usepackage{xspace}

\newcommand{\algatt}{\textsf{BadUnlearn}\xspace}
\newcommand{\alg}{\textsf{UnlearnGuard}\xspace}
\newcommand{\algdis}{\textsf{UnlearnGuard-Dist}\xspace}
\newcommand{\algcos}{\textsf{UnlearnGuard-Dir}\xspace}

\settopmatter{printacmref=true}

\copyrightyear{2025} 
\acmYear{2025} 
\setcopyright{cc}
\setcctype{by}
\acmConference[WWW Companion '25]{Companion Proceedings of the ACM Web Conference 2025}{April 28-May 2, 2025}{Sydney, NSW, Australia}
\acmBooktitle{Companion Proceedings of the ACM Web Conference 2025 (WWW Companion '25), April 28-May 2, 2025, Sydney, NSW, Australia}
\acmDOI{10.1145/3701716.3715494}
\acmISBN{979-8-4007-1331-6/25/04}

\makeatletter
\gdef\@copyrightpermission{
  \begin{minipage}{0.3\columnwidth}
   \href{https://creativecommons.org/licenses/by/4.0/}{\includegraphics[width=0.90\textwidth]{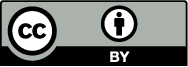}}
  \end{minipage}\hfill
  \begin{minipage}{0.7\columnwidth}
   \href{https://creativecommons.org/licenses/by/4.0/}{This work is licensed under a Creative Commons Attribution International 4.0 License.}
  \end{minipage}
  \vspace{5pt}
}
\makeatother

\begin{document}

\title{Poisoning Attacks and Defenses to Federated Unlearning}

\author{Wenbin Wang}
\authornote{Equal contribution. Wenbin Wang and Qiwen Ma conducted this research while they were interns under the supervision of Minghong Fang.}
\affiliation{
	\institution{ShanghaiTech University}
	\city{Shanghai}
	\country{China}
}

\author{Qiwen Ma}
\authornotemark[1]
\affiliation{
	\institution{Xidian University}
	\city{Xi'an}
	\country{China}
}

\author{Zifan Zhang}
\affiliation{
	\institution{North Carolina State University}
	\city{Raleigh}
	\country{USA}
}

\author{Yuchen Liu}
\affiliation{
	\institution{North Carolina State University}
	\city{Raleigh}
	\country{USA}
}

\author{Zhuqing Liu}
\affiliation{
	\institution{University of North Texas}
	\city{Denton}
	\country{USA}
}

\author{Minghong Fang}
\affiliation{
	\institution{University of Louisville}
	\city{Louisville}
	\country{USA}
}

\begin{abstract}
Federated learning allows multiple clients to collaboratively train a global model with the assistance of a server. However, its distributed nature makes it susceptible to poisoning attacks, where malicious clients can compromise the global model by sending harmful local model updates to the server. To unlearn an accurate global model from a poisoned one after identifying malicious clients, federated unlearning has been introduced. Yet, current research on federated unlearning has primarily concentrated on its effectiveness and efficiency, overlooking the security challenges it presents. In this work, we bridge the gap via proposing \algatt, the first poisoning attacks targeting federated unlearning. In \algatt, malicious clients send specifically designed local model updates to the server during the unlearning process, aiming to ensure that the resulting unlearned model remains poisoned. To mitigate these threats, we propose \alg, a robust federated unlearning framework that is provably robust against both existing poisoning attacks and our \algatt. The core concept of \alg is for the server to estimate the clients' local model updates during the unlearning process and employ a filtering strategy to verify the accuracy of these estimations. Theoretically, we prove that the model unlearned through \alg closely resembles one obtained by train-from-scratch. 
Empirically, we show that \algatt can effectively corrupt existing federated unlearning methods, while \alg remains secure against poisoning attacks.
\end{abstract}

\begin{CCSXML}
<ccs2012>
   <concept>
       <concept_id>10002978.10003006</concept_id>
       <concept_desc>Security and privacy~Systems security</concept_desc>
       <concept_significance>500</concept_significance>
       </concept>
 </ccs2012>
\end{CCSXML}

\ccsdesc[500]{Security and privacy~Systems security}

\keywords{Federated Unlearning, Poisoning Attacks, Robustness}

\maketitle


\section{Introduction} \label{sec:intro}

Federated learning (FL)~\cite{McMahan17} allows clients to collaboratively train models while respecting privacy. However, its decentralized nature introduces vulnerabilities to poisoning attacks~\cite{fang2020local,shejwalkar2021manipulating,zhang2024poisoning,yin2024poisoning}. While Byzantine-robust aggregation methods~\cite{yin2018byzantine,guerraoui2018hidden,blanchard2017machine,fang2024byzantine,fang2022aflguard,fang2025FoundationFL} mitigate such attacks, they remain susceptible to sophisticated strategies. 
%
As a result, researchers have proposed various detection-based methods~\cite{cao2020fltrust,yueqifedredefense,nguyen2022flame} to identify malicious clients in the FL system. Despite their potential, these methods cannot fully prevent poisoning, as the global model may already be compromised.
To eliminate the impact of malicious clients on the final learned global model, one obvious approach is to retrain the model entirely from scratch after excluding malicious clients, a process often referred to as \emph{train-from-scratch}; however, this approach is computationally inefficient.
Recognizing these inefficiencies, federated unlearning (FU) mechanisms~\cite{tao2024communication,cao2023fedrecover,liu2024survey} have been designed to efficiently remove the influence of corrupted models.
While numerous studies have focused on enhancing the utility of the FU process, such as improving its effectiveness and efficiency, the security concerns associated with FU have been largely overlooked.

\myparatight{Our work}
In this work, we bridge this gap by introducing \algatt, the first poisoning attacks specifically designed to target FU.
Note that in current FU methods, the server requests clients to calculate exact local model updates and transmit these updates back during specific training rounds to achieve a more accurate global model. This communication between the server and clients during the FU process creates an opportunity for attackers to exploit. 
In our proposed approach, \algatt, malicious clients send meticulously crafted harmful local model updates to the server throughout the FU process. This manipulation aims to ensure that the final {\em unlearned model} remains similar to the poisoned model (referred to as the {\em learned model}) obtained during the FL phase. In other words, \algatt seeks to cause existing FU methods to fail by preventing them from successfully unlearning an accurate global model from the poisoned one.
We frame our attack as an optimization problem, where malicious updates depend on the server's aggregation rules. However, solving this is challenging due to non-differentiable aggregation methods like Median~\cite{yin2018byzantine} in FU techniques such as FedRecover~\cite{cao2023fedrecover}. To address this, we model each malicious update as a perturbed version of the learned model.

To counter the \algatt attack, we propose a secure FU framework, \alg. The core idea of \alg is that the server stores historical data, including clients' local model updates and global models, from the FL process. This data helps estimate local model updates during the FU phase. However, these estimations may not always be accurate, so we introduce a filtering strategy to verify them. The strategy ensures that an estimated update does not significantly deviate from the corresponding stored update. The first variant of our robust FU method, \algdis, considers an estimated update accurate if it is close in distance to the stored update. However, \algdis overlooks the direction of updates, allowing the attacker to manipulate the method with small magnitude updates. To address this, we introduce \algcos, where an estimated update is considered precise if it matches the direction of the original update from the FL phase.
We summarize our main contributions in this paper as follows:

\begin{list}{\labelitemi}{\leftmargin=1em \itemindent=-0.08em \itemsep=.2em}


\item 
We present \algatt attack, the first poisoning attacks targeting the FU process.

\item 
We introduce \alg, a FU framework that withstands both existing poisoning attacks and our BadUnlearn. We theoretically demonstrate that, under mild assumptions, the model unlearned by \alg closely approximates the one trained from scratch.

\item 
Extensive experiments across various poisoning attacks, aggregation rules, and FU methods demonstrate that \algatt effectively compromises existing FU methods, whereas our proposed \alg remains robust.

\end{list}


\section{Preliminaries and Threat Model} 

\label{sec:related}

\subsection{Federated Learning (FL)}
\label{fl_steps}

In FL, \(n\) clients collaboratively train a global model \(\mathbf{w}\) without sharing local dataset \(D_i\), where $D_i$ is client $i$'s training data, $i\ \in [n]$, and $[n]$ represent the set $\{1, 2, \dots, n\}$.
The model minimizes the loss \(\mathcal{L}(\mathbf{w}, D) = \sum_{i \in [n]} \frac{|D_i|}{|D|} \mathcal{L}_i(\mathbf{w}, D_i)\), where \(D = \cup_{i \in [n]} D_i\). At the training round $t$, the server broadcasts \(\mathbf{w}^t\) to all clients (Step I), clients train locally and compute updates \(\mathbf{g}_i^t\) (Step II), and the server aggregates updates using \(\mathsf{ARR}\) to update \(\mathbf{w}^{t+1} = \mathbf{w}^t - \eta \cdot \mathsf{ARR} \left\{\mathbf{g}_i^t: i \in [n]\right\}\) (Step III), where \(\mathsf{ARR}\) is the aggregation rule,  \(\eta\) is the learning rate.

\subsection{Federated Unlearning (FU)}
FU~\cite{tao2024communication,cao2023fedrecover,sheng2024robust} aims to mitigate the influence of malicious clients after a global model is trained, focusing on removing entire clients rather than individual data points, as in centralized learning. 
Existing research on FU has primarily focused on improving utility and accuracy, neglecting security challenges. FU seeks to eliminate the effects of compromised clients, which can introduce manipulated updates, but current techniques largely overlook the risks of poisoning attacks. To address these concerns, more research is needed to ensure FU methods are robust against malicious clients.

\subsection{Threat Model}

\myparatight{Attacker's goal and knowledge}We consider a scenario where an attacker controls malicious clients. During the FU process, these clients send crafted model updates to ensure the unlearned model remains similar to the model learned from the FL process. 
We consider three settings: \emph{full-knowledge}, where the attacker knows all clients' updates and the aggregation rule \(\mathsf{ARR}\); \emph{partial-knowledge}, where the attacker knows only the updates of malicious clients and the \(\mathsf{ARR}\); and \emph{black-box}, where the attacker knows only the updates of malicious clients and is unaware of the server's aggregation rule.

\myparatight{Defender's goal}Our goal is to develop an effective unlearning method with two key objectives: (i) robustness, ensuring resilience to poisoning attacks during the FU process, and (ii) aggregation rule independence, allowing compatibility with any server aggregation rule. Moreover, the method assumes the server remains unaware of both the attacks and the number of malicious clients.


\section{Our \algatt} 
\label{BadUnlearn_sec}

In existing FU methods like FedRecover, the server collects and aggregates clients' local model updates during specific rounds. However, this communication allows malicious clients to introduce harmful updates, which can undermine the FU process. In our proposed \algatt attack, the attacker strategically craft updates to ensure the resulting unlearned model remains close to the final learned model from the FL phase. This attack can be formulated as an optimization problem:
\begin{align}
\min \Vert ^{\#}\mathbf{w} - {\mathsf{ARR}}\{\mathbf{g}_i^t: i \in [n]\} \Vert,
\end{align}
where \(^{\#}\mathbf{w}\) is the final learned model from the FL phase, 
${\mathsf{ARR}}\{\mathbf{g}_i^t: i \in [n]\}$ denotes the aggregated model update after the attack during the FU process, $\left\| \cdot  \right\|$ denotes the $\ell_2$ norm.

Solving this is challenging due to the non-differentiability of most aggregation rules. To overcome this, we approximate the solution by assuming each malicious update is a perturbed version of the pre-attack aggregated model \(^{\#}\mathbf{w}\). 
Specifically,
$
\mathbf{g}_k^t = {^{\#}\mathbf{w}} + \epsilon  \psi,  k \in \mathcal{B},
$
where \(\epsilon\) is a scaling factor, and \(\psi\) is a perturbation vector,
$\mathcal{B}$ represent the set of malicious clients. 
We can define \(\psi\) as a predefined parameter, for example, \(\psi = -\text{sign}({^{\#}\mathbf{w}})\), where \(\text{sign}\) represents the coordinate-wise sign of \({^{\#}\mathbf{w}}\), thereby simplifying the optimization problem to:
\begin{equation}
\label{recover_attack_convert2}
\begin{split}
\argmax_{\epsilon}  & - \Vert {^{\#}\mathbf{w}} - {\mathsf{ARR}}\{\mathbf{g}_i^t: i \in [n]\} \Vert  \\
\text{s.t.} \quad  \mathbf{g}_k^t &= {^{\#}\mathbf{w}} + \epsilon  \psi,
\quad \psi = -\text{sign}({^{\#}\mathbf{w}}), \quad k \in \mathcal{B}.
\end{split}
\end{equation}

The attacker determines \(\epsilon\) heuristically, adjusting it based on changes in the objective function. Once \(\epsilon\) is set, malicious updates are sent to the server, disrupting the FU process.


\section{\alg} 
\label{recover method}

\subsection{Overview}

In our proposed \alg, the server retains historical data from the FL process. After excluding malicious clients, it uses FU to unlearn and rebuild a clean global model. 
Starting with a new initialization, the server iteratively trains the model as outlined in Section~\ref{fl_steps}, estimating client updates from stored data and requesting the clients to compute the exact updates when necessary.

\subsection{Model Updates Prediction}
\label{model_prediction}

In our \alg, the server uses historical data from the FL process to estimate client updates during FU. Let $\tilde{\mathbf{g}}_i^t$ and $\tilde{\mathbf{w}}^t$ denote the local and global model updates at round $t$ during FL, stored by the server. 
In FU, ${\mathbf{g}}_i^t$ denotes the exact update (calculated using local training data), while $\bar{\mathbf{g}}_i^t$ represents the estimated update for client $i$.
Using the Cauchy mean value theorem, $\bar{\mathbf{g}}_i^t$ is given by:
\begin{align}
\bar{\mathbf{g}}_i^t = \tilde{\mathbf{g}}_i^t + \mathbf{H}_{i}^{t} ({\mathbf{w}}^{t} - \tilde{\mathbf{w}}^{t}),
\end{align}
where $\mathbf{H}_{i}^{t}$, the Hessian matrix for client $i$, is calculated as $\mathbf{H}_{i}^{t} = \intop_{0}^{1} \mathbf{H}(\tilde{\mathbf{w}}^{t} + y ({\mathbf{w}}^{t} - \tilde{\mathbf{w}}^{t})) dy$. Direct computation of $\mathbf{H}_{i}^{t}$ is computationally expensive; hence, the L-BFGS algorithm approximates it using limited historical data. 
Define the global model differences as $\Delta {\mathbf{w}}^{t} = {\mathbf{w}}^{t} - \tilde{\mathbf{w}}^{t}$, and the local update differences for client $i$ at training round $t$ as $\Delta {\mathbf{g}}_i^{t} = \bar{\mathbf{g}}_i^t - \tilde{\mathbf{g}}_i^t$. For the last $s$ rounds, let $\Delta {\mathbf{W}}^{t,s} = \{ \Delta {\mathbf{w}}^{t-s}, \dots, \Delta {\mathbf{w}}^{t-1} \}$ and $\Delta {\mathbf{G}}_i^{t,s} = \{ \Delta {\mathbf{g}}_i^{t-s}, \dots, \Delta {\mathbf{g}}_i^{t-1} \}$. L-BFGS uses $\Delta {\mathbf{W}}^{t,s}$, $\Delta {\mathbf{G}}_i^{t,s}$, and $\Delta {\mathbf{w}}^{t}$ to compute Hessian-vector products $\mathbf{H}_{i}^{t} ({\mathbf{w}}^{t} - \tilde{\mathbf{w}}^{t})$, which are then used to predict $\bar{\mathbf{g}}_i^t$.

\subsection{\algdis}
\label{first_method}

In Section~\ref{model_prediction}, we demonstrate that the server can estimate client updates during FU using historical data, though inaccuracies may arise and degrade model quality over time. To address this, we propose \algdis, a distance-based calibration technique to minimize estimation errors. 
The central idea of \algdis is that, during a specific training round, if a client's local model update estimated in the FU process significantly deviates from the corresponding local model update stored during the FL process, it indicates an inaccurate estimation.
Specifically, we consider the estimated update $\bar{\mathbf{g}}_i^t $ accurate if it satisfies the following condition:
\begin{align}
\label{first_check}
\max_{t_1 \in [t - r, t]} \left\| \bar{\mathbf{g}}_i^t - \tilde{\mathbf{g}}_i^{t_1} \right\| \le 
\max_{t_2, t_3 \in [t - r, t]} \left\| \tilde{\mathbf{g}}_i^{t_2} - \tilde{\mathbf{g}}_i^{t_3} \right\|,
\end{align}
where \(r\) is the buffer parameter. The left term measures the largest deviation of \(\bar{\mathbf{g}}_i^t\) from the stored updates, while the right term represents the largest deviation among the stored updates themselves. This ensures that \(\bar{\mathbf{g}}_i^t\) aligns with the client's historical update behavior. If this condition is met, the server uses \(\bar{\mathbf{g}}_i^t\) to compute the global model for round \(t\). Otherwise, it requests the exact update \(\mathbf{g}_i^t\) from the client to ensure reliability.

\subsection{\algcos}
\label{second_method}

Our proposed \algdis uses distance-based calibration to ensure the accuracy of estimated client updates, focusing on their magnitude. However, \algdis overlooks update directions, which can result in suboptimal unlearning even when the distance condition in Eq.~(\ref{first_check}) is satisfied. To address this, we introduce \algcos, a direction-aware calibration method that considers both magnitude and direction to reduce estimation errors during FU. In \algcos, an estimated update \(\bar{\mathbf{g}}_i^t\) is deemed accurate if it satisfies:
\begin{align}
\label{second_check}
\max\limits_{t_1 \in [t-r, t]} \text{cos}(\bar{\mathbf{g}}_{i}^{t}, \tilde{\mathbf{g}}_{i}^{t_1}) \le 
\max\limits_{t_2, t_3  \in [t-r, t]} \text{cos}(\tilde{\mathbf{g}}_{i}^{t_2}, \tilde{\mathbf{g}}_{i}^{t_3}),
\end{align}
where \(\text{cos}(\cdot, \cdot)\) measures cosine similarity. This ensures \(\bar{\mathbf{g}}_i^t\) aligns with the trend of historical updates without deviation. If the condition holds, we rescale \(\bar{\mathbf{g}}_i^t\) to match the typical magnitude of past updates using
$
\bar{\mathbf{g}}_i^t \leftarrow \frac{\bar{\mathbf{g}}_i^t}{\|\bar{\mathbf{g}}_i^t\|} \times \text{Median}\{\|\tilde{\mathbf{g}}_{i}^{t-r}\|, \dots, \|\tilde{\mathbf{g}}_{i}^{t}\|\},
$
where the median is computed over the magnitudes of updates from rounds \(t-r\) to \(t\). Algorithm~\ref{our_alg_app} provides the pseudocode of \alg, assuming \(n\) clients, with \(m\) malicious clients removed after detection, leaving \(n-m\) for FU.

\begin{algorithm}[t] 
	\caption{The \alg method.}\label{alg:unifiedfedrecoverydetailed}
	\begin{algorithmic}[1]
		\renewcommand{\algorithmicrequire}{\textbf{Input:}}
		\renewcommand{\algorithmicensure}{\textbf{Output:}}
		\Require Set of clients in the FU phase $\mathcal{S}=\{i \mid m+1\leq i \leq n \}$; global models and model updates information in FL phase $ \tilde{\mathbf{w}}^0,  \tilde{\mathbf{w}}^1, \ldots, \tilde{\mathbf{w}}^T;  \tilde{\mathbf{g}}^0_i, \tilde{\mathbf{g}}^1_i,\ldots, \tilde{\mathbf{g}}^{T-1}_i$; learning rate $\eta$; L-BFGS buffer size $s$; aggregation rule 
 $\mathsf{ARR}$. 
		\Ensure Unlearned model $\mathbf
  {w}^{T}$. 
		\State Initialize $\mathbf{{w}}_0 \gets  \tilde{\mathbf{w}}^0$.
		\For{$t = 0 \text{ to } T-1$} 
            \If{$t < r $}
                \State The server broadcasts $\mathbf{{w}}_t$ to the  clients in $\mathcal{S}$.
                \State // Each client trains its local model and computes model update
                \State Each client $i$ computes 
                $\mathbf{g}^{t}_i=\nabla \mathcal{L}_i(\mathbf{{w}}^t)$.
                \State // Aggregation and global model updating.
                \State $\mathbf{{w}}^{t+1} \gets \mathbf{{w}}^{t} - \eta\cdot\mathsf{ARR}(\mathbf{g}^t_{m+1}, \ldots, \mathbf{g}^t_{n})$. 
            \Else
                \State Update L-BFGS buffers 
                $\Delta {\mathbf{W}}^{t,s}$, $\Delta{\mathbf{G}}_i^{t,s}$.
                    \For{$i = m+1 \text{ to } n$}
                        \State  $\Delta\mathbf{w}^t = \mathbf{w}^t - \tilde{\mathbf{w}}^t$ 

                        \State // We denote $\text{diag}(\mathbf{X})$ as the diagonal matrix of $\mathbf{X}$, and  $\text{tril}(\mathbf{X})$ the lower triangular matrix of $\mathbf{X}$.
                        
                        \State  $\mathbf{A}^{t,s}_i = (\Delta\mathbf{W}^{t,s})^{\top}\Delta\mathbf{G}_i^{t,s}$, $\mathbf{D}^{t,s}_i = \text{diag}(\mathbf{A}^{t,s}_i)$

                        \State$\mathbf{L}^{t,s}_i = \text{tril}(\mathbf{A}^{t,s}_i)$ 
                        
                        \State  $\sigma = ((\Delta \mathbf{g}^{t-2}_i)^{\top} \Delta \mathbf{w}^{t-2}) / ((\Delta \mathbf{w}^{t-2})^{\top}\Delta \mathbf{w}^{t-2})$
                        \State $\mathbf{p} = \begin{bmatrix}
                        -\mathbf{D}_i^{t,s} & (\mathbf{L}_i^{t,s})^{\top} \\
                        \mathbf{L}_i^{t,s} & \sigma(\Delta\mathbf{W}^{t,s})^{\top}\Delta\mathbf{W}^{t,s}
                        \end{bmatrix}^{-1}
                        \begin{bmatrix}
                        (\Delta\mathbf{G}_i^{t,s})^{\top}\Delta\mathbf{w}^t \\
                        \sigma(\Delta\mathbf{W}^{t,s})^{\top}\Delta\mathbf{w}^t 
                        \end{bmatrix}$ 
                        \State $\mathbf{H}_i^t\Delta\mathbf{w}^t = \sigma \Delta\mathbf{w}^t - \begin{bmatrix}
                        \Delta\mathbf{G}^{t,s}_i &
                        \sigma(\Delta\mathbf{W}^{t,s})
                        \end{bmatrix}\mathbf{p}$
                        \State $\bar{\mathbf{{g}}}^t_i= \tilde{\mathbf{g}}_i^t + \mathbf{H}_i^t\Delta\mathbf{w}^t$
                        \State // Exact training
                        \If{condition in Eq.~(\ref{first_check}) or Eq.~(\ref{second_check}) is not met}
                            \State Compute $\mathbf{g}^t_i=\nabla \mathcal{L}_i(\mathbf{w}^t)$, and update $\bar{\mathbf{g}}^t_i \gets \mathbf{g}^t_i$ (rescale $\bar{\mathbf{g}}_i^t$ if necessary).
                        \EndIf
                    \EndFor
                    \State $\mathbf{{w}}^{t+1} \gets \mathbf{{w}}^{t} - \eta\cdot\mathsf{ARR}(\mathbf{\bar{g}}^t_{m+1}, \ldots, \mathbf{\bar{g}}^t_n)$
                \EndIf
		\EndFor
	\end{algorithmic}
 \label{our_alg_app}
\end{algorithm}

\subsection{Security Analysis}

\begin{assumption} \label{assumption1}
The loss function $\mathcal{L}_i$ for client $i$ is $\mu$-strongly convexity and $L$-smoothness. Specifically, the following inequalities hold for any vectors $\mathbf{w}_1, \mathbf{w}_2 \in \mathbb{R}^d$:
\begin{align}
    \mathcal{L}_i(\mathbf{w}_1) \ge \mathcal{L}_i(\mathbf{w}_2) + \nabla\mathcal{L}_i(\mathbf{w}_2)^\top (\mathbf{w}_1 - \mathbf{w}_2) + \frac{\mu}{2} \|\mathbf{w}_1 - \mathbf{w}_2  \|^2, \nonumber \\
    \mathcal{L}_i(\mathbf{w}_1) \le \mathcal{L}_i(\mathbf{w}_2) + \nabla\mathcal{L}_i(\mathbf{w}_2)^\top (\mathbf{w}_1 - \mathbf{w}_2) + \frac{L}{2} \|\mathbf{w}_1 - \mathbf{w}_2  \|^2 \nonumber.
 \end{align}
\end{assumption}

\begin{assumption} \label{assumption2}
The L-BFGS algorithm bounds the error in approximating Hessian-vector products as
$
\Vert\mathbf{H}_{i}^{t} ({\mathbf{w}}^{t} - \tilde{\mathbf{w}}^{t}) + \tilde{\mathbf{g}}_i^t - \bar{\mathbf{g}}_i^t\Vert \le M,  \text{for any $i$ and $t$},
$
where \(M\) is a finite positive constant.
    \label{as:approx}
\end{assumption}

\begin{thm}  
\label{theorem_1}
Under Assumptions~\ref{assumption1} and \ref{assumption2}, with a learning rate \(\eta \le \min(\frac{1}{\mu}, \frac{1}{L})\) and effective detection of all malicious clients during FL, the discrepancy between the global model unlearned via \algdis (or \algcos) and the train-from-scratch model in each round \(t > 0\) is bounded as:
\begin{align}
\|\mathbf{w}^{t}-\ddot{\mathbf{w}}^{t}\|  \leq(\sqrt{1-\eta\mu})^{t}\|{\mathbf{w}}^{0}-{\ddot{\mathbf{w}}}^{0}\|+\frac{1-(\sqrt{1-\eta\mu})^{t}}{1-\sqrt{1-\eta\mu}}\eta M  \nonumber 
\end{align}
where \({\mathbf{w}}^t\) and \(\ddot{\mathbf{w}}^t\) are the global models from \algdis (or \algcos) and train-from-scratch, respectively, in round \(t\).
\end{thm}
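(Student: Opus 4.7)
The plan is to induct on $t$, coupling the \alg iterate with the train-from-scratch iterate and then closing the recursion. Writing $\mathbf{w}^{t+1} = \mathbf{w}^t - \eta\,\mathsf{ARR}\{\bar{\mathbf{g}}_i^t\}$ and $\ddot{\mathbf{w}}^{t+1} = \ddot{\mathbf{w}}^t - \eta\,\mathsf{ARR}\{\nabla\mathcal{L}_i(\ddot{\mathbf{w}}^t)\}$, I would add and subtract $\eta\,\mathsf{ARR}\{\nabla\mathcal{L}_i(\mathbf{w}^t)\}$ and apply the triangle inequality to split
\begin{equation*}
\|\mathbf{w}^{t+1} - \ddot{\mathbf{w}}^{t+1}\| \le \|\mathbf{w}^t - \ddot{\mathbf{w}}^t - \eta(\mathsf{ARR}\{\nabla\mathcal{L}_i(\mathbf{w}^t)\} - \mathsf{ARR}\{\nabla\mathcal{L}_i(\ddot{\mathbf{w}}^t)\})\| + \eta\,\|\mathsf{ARR}\{\bar{\mathbf{g}}_i^t\} - \mathsf{ARR}\{\nabla\mathcal{L}_i(\mathbf{w}^t)\}\|,
\end{equation*}
which isolates a deterministic gradient-descent contraction from the L-BFGS approximation error.

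For the approximation term, the Cauchy mean value representation from Section~\ref{model_prediction} identifies $\nabla\mathcal{L}_i(\mathbf{w}^t)$ with $\tilde{\mathbf{g}}_i^t + \mathbf{H}_i^t(\mathbf{w}^t - \tilde{\mathbf{w}}^t)$, so Assumption~\ref{assumption2} yields $\|\bar{\mathbf{g}}_i^t - \nabla\mathcal{L}_i(\mathbf{w}^t)\| \le M$ whenever the filter in Eq.~(\ref{first_check}) or Eq.~(\ref{second_check}) accepts the estimate, and the server substitutes the exact gradient otherwise, making the error zero in that case. Assuming $\mathsf{ARR}$ inherits this coordinatewise bound (it does for averaging, and for standard robust aggregators restricted to the honest-only pool assumed by the theorem), the approximation contribution is at most $\eta M$. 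For the contraction term, I would identify $\mathsf{ARR}\{\nabla\mathcal{L}_i(\cdot)\}$ with $\nabla\mathcal{L}(\cdot)$ and expand $\|\mathbf{x}-\mathbf{y}-\eta(\nabla\mathcal{L}(\mathbf{x})-\nabla\mathcal{L}(\mathbf{y}))\|^2$; co-coercivity of an $L$-smooth function together with $\mu$-strong convexity from Assumption~\ref{assumption1} and the choice $\eta \le 1/L$ then yield
\begin{equation*}
\|\mathbf{w}^t - \ddot{\mathbf{w}}^t - \eta(\nabla\mathcal{L}(\mathbf{w}^t) - \nabla\mathcal{L}(\ddot{\mathbf{w}}^t))\|^2 \le (1-\eta\mu)\,\|\mathbf{w}^t - \ddot{\mathbf{w}}^t\|^2,
\end{equation*}
i.e., contraction by the factor $\sqrt{1-\eta\mu}$ that appears in the theorem. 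Combining the two bounds gives the one-step recursion $\|\mathbf{w}^{t+1} - \ddot{\mathbf{w}}^{t+1}\| \le \sqrt{1-\eta\mu}\,\|\mathbf{w}^t - \ddot{\mathbf{w}}^t\| + \eta M$, which I would unroll using the geometric sum $\sum_{k=0}^{t-1}(\sqrt{1-\eta\mu})^k$ to obtain the stated closed form.

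The main obstacle is reconciling a potentially nonlinear, non-differentiable $\mathsf{ARR}$ (e.g., coordinate-wise Median) with a contraction argument that is cleanest for the arithmetic mean. I would close this gap by leveraging the premise that all malicious clients have been detected and excluded before unlearning begins: on an honest-only client pool the standard robust aggregators either reduce to averaging or remain within a bounded slack of it, so the contraction inequality either applies directly or the residual slack can be folded into $M$ in Assumption~\ref{assumption2}. A secondary subtlety is that the filter may accept different subsets of clients in different rounds; since rejection replaces an estimate by the exact gradient, it only strengthens the per-round bound, so the worst-case inequality still goes through in a round-independent way.
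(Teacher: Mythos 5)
Your proof follows essentially the same route as the paper's: the same add-and-subtract decomposition into (i) a gradient-descent coupling term contracted by the factor $\sqrt{1-\eta\mu}$ using $\mu$-strong convexity, $L$-smoothness and $\eta \le \min(1/\mu,1/L)$, and (ii) an L-BFGS approximation term bounded by $\eta M$ via Assumption~\ref{as:approx}, with the filter/exact-training rounds only removing the $\eta M$ term, and the recursion unrolled by a geometric sum. The one difference is that you explicitly confront general (non-averaging) aggregation rules, whereas the paper's proof simply writes all updates as weighted averages $\sum_i \alpha_i(\cdot)$ and never addresses that issue, so your flagged concern is legitimate but is sidestepped rather than resolved in the original argument.
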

\begin{proof}
See Appendix~\ref{sec:appendix_1} for the proof.
\end{proof}

\section{Experiments} \label{sec:exp}

\subsection{Experimental Setup}

Due to space constraints, we only use the MNIST dataset.
For aggregation during FL training and unlearning, we mainly use FedAvg~\cite{McMahan17}, Median~\cite{yin2018byzantine}, and Trimmed-mean (TrMean)~\cite{yin2018byzantine}, while also evaluating Bulyan~\cite{guerraoui2018hidden}, Krum~\cite{blanchard2017machine}, and FLAME~\cite{nguyen2022flame}. We examine three attack schemes: Trim attack~\cite{fang2020local}, which disrupts FL by crafting malicious updates to distort global models; Backdoor attack~\cite{bagdasaryan2020backdoor}, where triggers embedded in training data lead to targeted malicious outcomes; and our \algatt, aimed to manipulate FU process.
Full-knowledge attack setting is considered by default.

\subsubsection{FL and FU Settings}

Training involved 2,000 rounds for MNIST (learning rate $3 \times 10^{-4}$, batch size 32). 
A CNN architecture consisting of two convolutional layers, each paired with a pooling layer, followed by two fully connected layers.
We assume 100 clients, with 20\% being malicious. FU follows FL settings and uses 80 clients, with 20\% performing attacks like Trim attack or \algatt attack. Malicious clients detected in FL are removed during FU.  
The buffer parameter is $r=5$, assuming Trim attack in FL and BadUnlearn attack in FU.
For MNIST, we simulate the Non-IID distribution using the approach in~\cite{fang2020local}. 
The Non-IID degree is set to 0.5 following~\cite{fang2020local}.


\begin{table}[t]
  \centering
   \scriptsize
  \caption{Results of various aggregation rules under attacks where the attacker manipulates the FL process on MNIST.
  }
    \label{attack_fl}
    \begin{tabular}{|c|c|c|c|c|}
    \hline
    Dataset & Attack FL & FedAvg & Median  & TrMean \\
   \hline
    \multirow{3}{*}{MNIST} & No attack &  0.04     &   0.06    & 0.06 \\
\cline{2-5}          & Trim attack &   0.23    &   0.41    & 0.20 \\
\cline{2-5}          & Backdoor attack &   0.02 / 0.99    &  0.09 / 0.01     & 0.06 / 0.01 \\
\hline
    \end{tabular}%
	\vspace{-.15in}
\end{table}%

\begin{table}[t]
  \centering
 \scriptsize
  \caption{Results of Train-from-scratch and Historical methods on MNIST dataset, where the attacker manipulates the FL process, and the server aims to obtain a clean global model.}
    \label{scratch_history_result}
    \begin{tabular}{|c|c|c|c|}
    \hline
    Attack FL  & $\mathsf{ARR}$ & Train-from-scratch & Historical \\
    \hline
    \multirow{3}{*}{Trim attack} & FedAvg  &   0.05    & 0.90 \\
\cline{2-4}          & Median  &  0.07     & 0.90 \\
\cline{2-4}          & TrMean &    0.06   & 0.90 \\
   \hline
    \multirow{3}{*}{Backdoor attack} & FedAvg  &   0.05 / 0.00    & 0.85 / 0.00 \\
\cline{2-4}          & Median  &  0.06 / 0.00     & 0.89 / 0.00 \\
\cline{2-4}          & TrMean &  0.06 / 0.00     & 0.86 / 0.00 \\
   \hline
    \end{tabular}%
\vspace{-.15in}
\end{table}%

\begin{table}[t]
	\centering
	 \scriptsize
	 \addtolength{\tabcolsep}{-1.965pt}
	\caption{Results of FU methods on the MNIST dataset, where the attacker manipulates both FL and FU processes.}
	\label{main_mnist}
	\begin{tabular}{|c|c|c|c|c|c|}
		\hline
		Attack FL  & Attack FU & $\mathsf{ARR}$ & FedRecover & \algdis & \algcos \\
		\hline
		\multirow{9}{*}{No attack} & \multirow{3}{*}{No attack} & FedAvg  &  0.06   &   0.05   & 0.05 \\
		\cline{3-6}          &       & Median  &   0.05   &    0.06   & 0.04 \\
		\cline{3-6}          &       & TrMean &  0.07   &    0.07   & 0.04 \\
		\cline{2-6}          & \multirow{3}{*}{Trim attack} & FedAvg  &   0.05   &  0.06  & 0.06 \\
		\cline{3-6}          &       & Median  &   0.12   &   0.12   & 0.08 \\
		\cline{3-6}          &       & TrMean &   0.15   &   0.11   & 0.06 \\
		\cline{2-6}          & \multirow{3}{*}{\algatt} & FedAvg  &   0.06   &   0.05   & 0.05 \\
		\cline{3-6}          &       & Median  &   0.07    &   0.11    & 0.07 \\
		\cline{3-6}          &       & TrMean &    0.06   &   0.13   & 0.05 \\
		\hline
		\multirow{9}{*}{Trim attack} & \multirow{3}{*}{No attack} & FedAvg  &  0.05     &   0.06    & 0.05 \\
		\cline{3-6}          &       & Median  &  0.11     &   0.14    & 0.09 \\
		\cline{3-6}          &       & TrMean &    0.15   &    0.12   & 0.06 \\
		\cline{2-6}          & \multirow{3}{*}{Trim attack} & FedAvg  &  0.06     &    0.07   & 0.06 \\
		\cline{3-6}          &       & Median  &   0.15    &   0.14    & 0.10 \\
		\cline{3-6}          &       & TrMean &    0.16   &   0.15    & 0.07 \\
		\cline{2-6}          & \multirow{3}{*}{\algatt} & FedAvg  &  0.24     &    0.07   & 0.06 \\
		\cline{3-6}          &       & Median  &   0.39    &    0.14   & 0.09 \\
		\cline{3-6}          &       & TrMean &   0.23    &   0.14    & 0.06 \\
		\hline
		\multirow{9}{*}{\makecell {Backdoor \\ attack}} & \multirow{3}{*}{No attack} & FedAvg  &   0.05 / 0.00    &    0.02 / 0.00   & 0.02 / 0.00  \\
		\cline{3-6}          &       & Median  &   0.12 / 0.01    &    0.11 / 0.01   & 0.10 / 0.00 \\
		\cline{3-6}          &       & TrMean &   0.07 / 0.00    &   0.06 / 0.00    & 0.06 / 0.00 \\
		\cline{2-6}          & \multirow{3}{*}{Trim attack} & FedAvg  &   0.06 / 0.01    &   0.03 / 0.01    & 0.03 / 0.00 \\
		\cline{3-6}          &       & Median  &   0.12 / 0.00    &  0.12 / 0.00     & 0.11 / 0.00 \\
		\cline{3-6}          &       & TrMean &  0.04 / 0.00     &   0.06 / 0.00    & 0.05 / 0.00 \\
		\cline{2-6}          & \multirow{3}{*}{\algatt} & FedAvg  &   0.04 / 0.98    &   0.03 / 0.01    & 0.03 / 0.00 \\
		\cline{3-6}          &       & Median  &   0.10 / 0.02    &  0.13 / 0.00     & 0.12 / 0.00 \\
		\cline{3-6}          &       & TrMean &   0.08 / 0.03    &     0.07 / 0.00  & 0.07 / 0.00 \\
		\hline
	\end{tabular}%
	\vspace{-.10in}
\end{table}%

\subsubsection{Evaluation Metrics} 
We consider two metrics: testing error rate (TER) and attack success rate (ASR). Higher TER and ASR indicate stronger attacks, lower values reflect better defenses.

\subsubsection{Compared Methods}We compare \alg{} with three unlearning methods: Train-from-scratch, Historical-information-only (Historical)~\cite{cao2023fedrecover}, and FedRecover~\cite{cao2023fedrecover}. Historical method rebuilds the global model using stored updates from the FL process, refining it incrementally without communication with client during FU.

\begin{table}[t]
  \centering
   \addtolength{\tabcolsep}{-3.55pt}
   \scriptsize
  \caption{Partial-knowledge and black-box attacks.}
    \begin{tabular}{|c|c|c|c|c|c|c|}
     \hline
    \multirow{3}{*}{ARR} & \multicolumn{3}{c|}{Partial-knowledge} & \multicolumn{3}{c|}{Black-box} \\
\cline{2-7}          & FedRecover & {\makecell {UnlearnGuard \\ -Dist}}   & {\makecell {UnlearnGuard \\ -Dir}}    & FedRecover &  {\makecell {UnlearnGuard \\ -Dist}}    & {\makecell {UnlearnGuard \\ -Dir}}   \\
     \hline
    FedAvg &   0.23    &  0.07     &  0.06     &   0.22    &  0.06     & 0.05 \\
     \hline
    Median &  0.38     &   0.14    &    0.09   &  0.38     & 0.14      & 0.08 \\
     \hline
    TrMean &  0.23     &   0.14    &   0.06    &  0.23     &  0.13     &  0.05 \\
    \hline
    \end{tabular}%
  \label{partial_black_attack}%
  \vspace{-.10in}
\end{table}%

\begin{table}[t]
    \centering
      \scriptsize
    \caption{The server employs advanced rules for FL and FU.}
    \label{more_agg}
    \begin{tabular}{|c|c|c|c|c|}
       \hline
        $\mathsf{ARR}$ & Learned model  \(^{\#}\mathbf{w}\) & FedRecover & \algdis & \algcos \\
       \hline
        Bulyan  & 0.13 & 0.19 & 0.12 & 0.13 \\
         \hline
        Krum & 0.10 & 0.16 & 0.11 & 0.12 \\
         \hline
        FLAME& 0.06 & 0.21 & 0.09 & 0.08 \\
       \hline
    \end{tabular}
	\vspace{-.10in}
\end{table}

\begin{table}[t]
    \centering
      \scriptsize
         \addtolength{\tabcolsep}{-0.55pt}
    \caption{Malicious clients execute advanced attacks in FL.}
    \label{more_attack}
    \begin{tabular}{|c|c|c|c|c|}
       \hline
        Attack  & Learned model  \(^{\#}\mathbf{w}\)& FedRecover & \algdis & \algcos \\
        \hline
Min-Max & 0.22 & 0.20 & 0.13 & 0.12 \\
\hline
Min-Sum & 0.23 & 0.24 & 0.12 & 0.09 \\
\hline
LIE& 0.18 & 0.19 & 0.09 & 0.08 \\
\hline
    \end{tabular}
	\vspace{-.12in}
\end{table}

\subsection{Experimental Results}

\myparatight{\algatt compromises FU methods, while \alg remains robust}Table~\ref{attack_fl} presents the performance of aggregation strategies under attacks during FL on the MNIST dataset.
``Attack FL'' means malicious clients perform attack during FL.
Backdoor attack results are presented as ``TER / ASR'', while only TER is reported for other attacks.
``No attack'' means all clients are benign. Table~\ref{scratch_history_result} illustrates the results of Train-from-scratch and Historical methods.
We observe from Table~\ref{attack_fl} that aggregation strategies like FedAvg, Median, and TrMean are vulnerable to poisoning attacks, while Table~\ref{scratch_history_result} indicates the Historical method fails to recover a clean global model when FL is compromised.

Table~\ref{main_mnist} presents the results of the FedRecover, \algdis, and \algcos methods on the MNIST dataset.
``Attack FU'' means malicious clients perform attack during FU.
Based on Tables~\ref{attack_fl}-\ref{main_mnist}, we can draw several observations.
\algatt effectively compromises FedRecover, a leading FU technique. Additionally, FedRecover demonstrates resilience against Trim attacks during the FU phase, consistent with findings in~\cite{cao2023fedrecover}. Furthermore, our \algdis and \algcos, exhibit robustness against various poisoning attacks and successfully unlearn a clean global model from a compromised one.

\myparatight{Partial-knowledge and black-box attacks}Table~\ref{partial_black_attack} presents the results of partial-knowledge and black-box attacks. In the black-box setting, the attacker is unaware of the server's aggregation rule and uses the Median rule to generate malicious updates. The results demonstrate that even in practical scenarios like the black-box setting, \algatt remains effective in compromising FedRecover.

\myparatight{Complex aggregation rules and advanced attacks}
We examines scenarios with advanced aggregation methods like Bulyan~\cite{guerraoui2018hidden}, Krum~\cite{blanchard2017machine}, and FLAME~\cite{nguyen2022flame}, where the attacker applies Trim attack during FL and \algatt attack during FU. The same aggregation method is consistently applied across both FL and FU. 
Note that FLAME is an advanced method that can detect malicious clients.
As shown in Table~\ref{more_agg}, our unlearning methods effectively mitigate poisoning attacks, even under these sophisticated aggregation strategies. We also evaluate cases where clients execute advanced attacks during FL, such as Min-Max~\cite{shejwalkar2021manipulating}, Min-Sum~\cite{shejwalkar2021manipulating}, and LIE~\cite{baruch2019little}, while continuing to employ \algatt during FU. Results in Table~\ref{more_attack} show that FedRecover remains vulnerable to \algatt.

\myparatight{Adaptive attack}
We evaluate \alg against a challenging Adaptive attack~\cite{shejwalkar2021manipulating}, where the attacker has full knowledge of the FU process, including clients' updates, the server's aggregation rule, and filtering strategies (e.g., Eq.~(\ref{first_check}) or Eq.~(\ref{second_check})). Malicious clients apply the Trim attack during FL and the Adaptive attack during FU. As shown in Table~\ref{adaptive_attack}, both \algdis and \algcos demonstrate resilience to these strong attacks.

\myparatight{Attack more complex FU methods}
We examine a scenario where the server uses the advanced FU method,
such as FATS~\cite{tao2024communication} or TFUA~\cite{sheng2024robust}. 
The attacker performs Trim attack during FL, and our \algatt attack during FU. 
The testing error rates for FATS and TFUA are 0.25 and 0.22, respectively, demonstrating that \algatt effectively disrupts FATS and TFUA during the FU process.

\myparatight{Server storage overhead}Our \alg stores historical data with a space complexity of \(O(d)\), requiring \(O(dnT)\) storage for \(d\) parameters, \(n\) clients, and \(T\) training rounds. In our experiments, it used up to 200 GB, manageable for modern servers.

\myparatight{Ablation studies}Fig.~\ref{fig:Effect_of_degree_of_non_iid(attack)} illustrates the impact of Non-IID degree, with larger x-axis values indicating more Non-IID data (default: 0.5). Both \algdis and \algcos show similar increasing trends, but \algcos maintains a consistently lower TER, while FedRecover exhibits the highest TER, worsening with increasing Non-IID.
Fig.~\ref{fig:Percentage_of_Malicious_Clients} shows that as malicious clients increase from 5\% to 50\%, FedRecover's TER rises from 0.12 to 0.89, while \alg remain robust.
Fig.~\ref{fig:Effect_of_number_of_p} shows the impact of buffer size \(r\) on \alg, with TER remaining generally stable.

\myparatight{Discussions}Our attack differs from traditional FL attacks by targeting the FU phase to prevent effective unlearning, ensuring the unlearned model mirrors the learned one regardless of testing error. Unlike FL poisoning attacks that corrupt training, we aim to compromise unlearning. 
Furthermore, our attack is feasible. 
Malicious clients remain dormant during FL phase—they do not perform attacks and act like benign clients—thereby evading detection during FL. These clients then execute their attack during FU process.
Note that the server can detect malicious clients during FU process.

\begin{table}[t]
    \centering
     \scriptsize
      \caption{Results of Adaptive attack.}
    \begin{tabular}{|c|c|c|}
       \hline
        $\mathsf{ARR}$ & \algdis & \algcos \\
        \hline
        FedAvg & 0.06 & 0.05 \\
        \hline
        Median & 0.16 & 0.08 \\
        \hline
       TrMean & 0.15 & 0.07 \\
       \hline
    \end{tabular}
  \label{adaptive_attack}
\vspace{-.18in}
\end{table}

\begin{figure}[!t]
	\centering
\subfloat[Degree of Non-IID.]{\includegraphics[width=0.16 \textwidth]{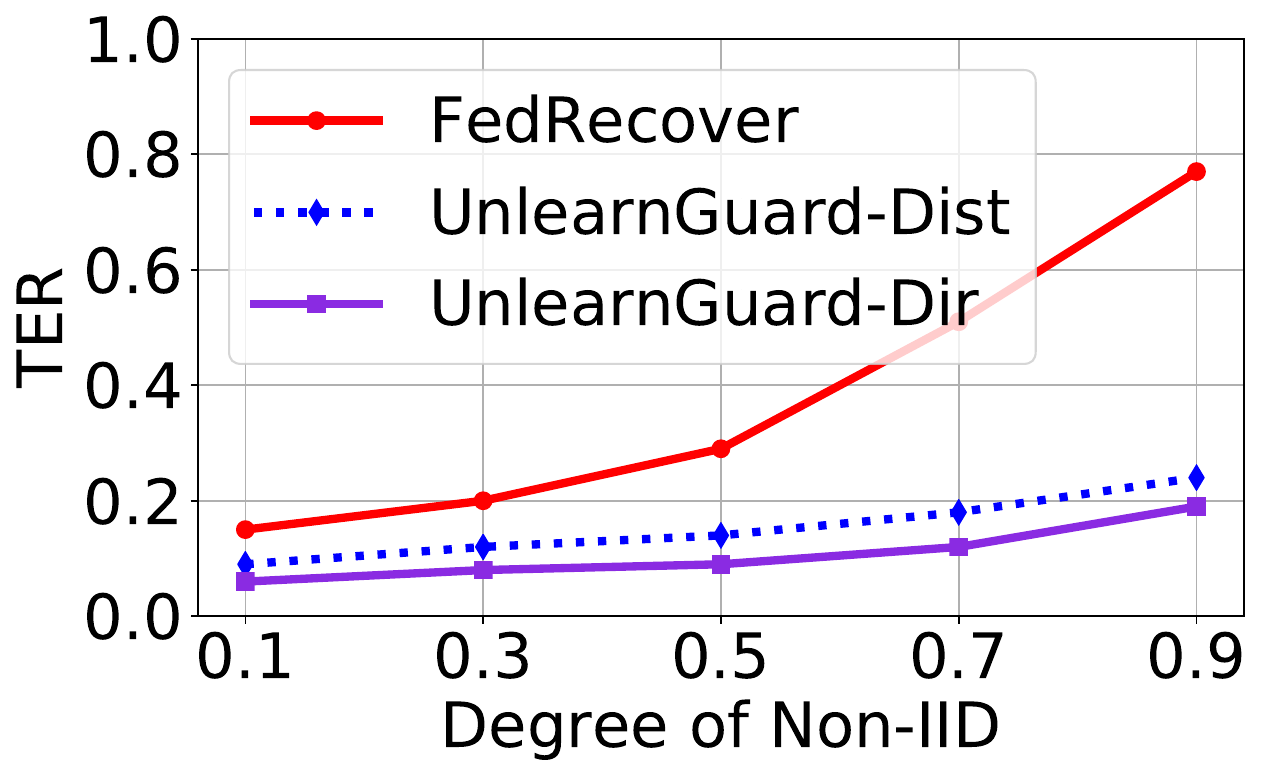}\label{fig:Effect_of_degree_of_non_iid(attack)}}
	\subfloat[Malicious fraction.]{\includegraphics[width=0.16 \textwidth]{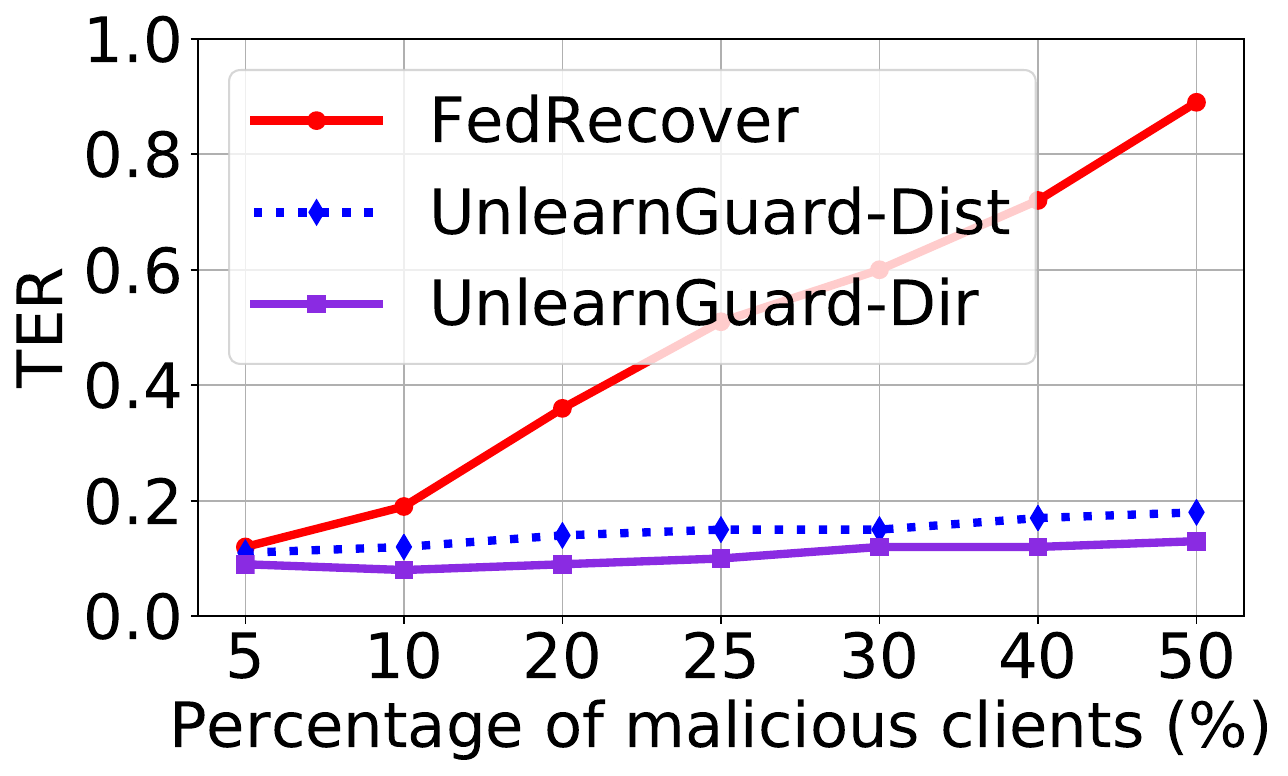}\label{fig:Percentage_of_Malicious_Clients}} 
	\subfloat[Number of $r$.]{\includegraphics[width=0.16 \textwidth]{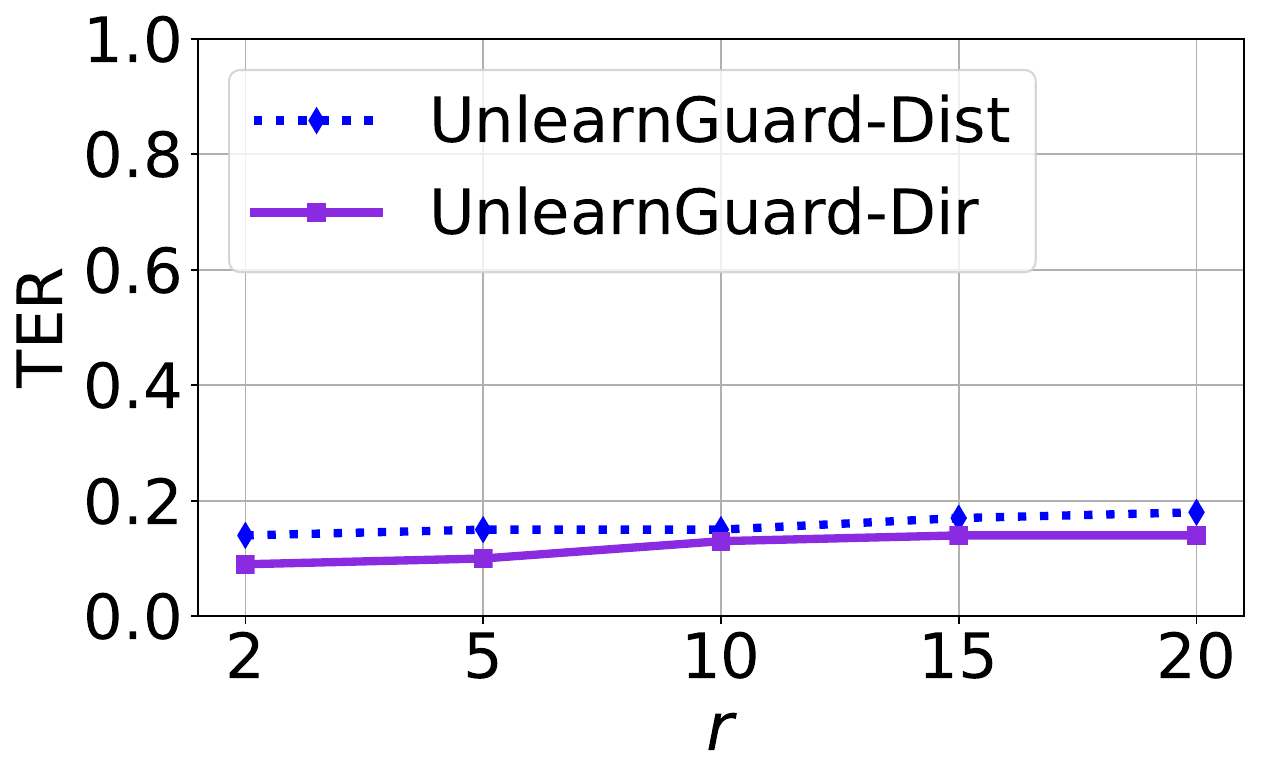}\label{fig:Effect_of_number_of_p}} 
 \caption{Impact of degree of Non-IID, percentage of malicious clients during FU, and the number of $r$ on MNIST.
}
 \label{cost_fig}
 \vspace{-.15in}
\end{figure}




%
%

\section{Conclusion}
\label{sec:conclusion}

We proposed \algatt, a novel poisoning attack targeting FU, which evades detection and disrupts unlearning by subtly altering model updates. 
To counter this, we introduced \algdis and \algcos, robust unlearning methods proven secure against poisoning attacks both theoretically and empirically.

\bibliographystyle{ACM-Reference-Format}
\bibliography{refs}


\appendix

\section{Proof of Theorem~\ref{theorem_1}} 
\label{sec:appendix_1}

Here, we establish the bound of $\|{\mathbf{w}}^t - \ddot{\mathbf{w}}^t\|$ between the global model obtained by \algdis (or \algcos) and the train-from-scratch method. 
First, we delineate the update rules for unlearning processes.

\myparatight{Estimation phase} we employ both gradient (or local model update) and Hessian information such that:    
\begin{align} 
    {\mathbf{w}}^{t+1} = {\mathbf{w}}^t - \eta \sum_{i=m+1}^n \alpha_{i} [\mathbf{H}_{i}^{t} ({\mathbf{w}}^{t} - \tilde{\mathbf{w}}^{t}) + \tilde{\mathbf{g}}_i^t], 
    \label{eq:w^{t+1} estimation} 
\end{align} 
where $\alpha_{i}$ = $\frac{|D_i|}{|D'|}$, $D' = \bigcup_{i=m+1}^n D_i$ denotes the joint dataset of the remaining  $n-m$  clients. 
For simplicity, we will adopt these notations in the subsequent analysis.

\myparatight{Exact training phase} we only use the exact gradient of each client such that:
\begin{align} 
    {\mathbf{w}}^{t+1} = \mathbf{w}^t - \eta \sum_{i=m+1}^n \alpha_{i} \mathbf{g}_i^t. 
    \label{eq:w^{t+1} update} 
\end{align}

Additionally, we denote $\mathbf{h}^t_i$ as the model update contributed by the client $i$ model in the $t$ round of the train-from-scratch strategy. The global model, updated in round $t$ is given by:
\begin{align} 
    \ddot{\mathbf{w}}^{t} = \ddot{\mathbf{w}}^{t-1} - \eta\sum_{i=m+1}^n \alpha_{i} \mathbf{h}_{i}^{t}. \label{eq:v_update} 
\end{align} 

Consider \textbf{Estimation} phase, where \(\mathbf{w}^t\) is updated according to Eq.~(\ref{eq:w^{t+1} estimation}). Based on Eqs.~(\ref{eq:w^{t+1} update}) and (\ref{eq:v_update}), we have:
\begin{align}
\|{\mathbf{w}}^{t+1}-\ddot{\mathbf{w}}^{t+1}\| & =\| ({\mathbf{w}}^{t}-\eta\sum_{i=m+1}^{n}\alpha_{i}\mathbf{g}_{i}^{t})-(\ddot{\mathbf{w}}^{t}-\eta\sum_{i=m+1}^{n}\alpha_{i}\mathbf{h}_{i}^{t})\| \nonumber \\
 & =\|{\mathbf{w}}^{t}-\ddot{\mathbf{w}}^{t}-\eta\sum_{i=m+1}^{n}\alpha_{i}(\mathbf{g}_{i}^{t}-\mathbf{h}_{i}^{t})\|,
\end{align}

Applying triangle inequality, we bound for the difference between \({\mathbf{w}}^{t+1}\) and \(\ddot{\mathbf{w}}^{t+1}\) as: 
\begin{align} & \|{\mathbf{w}}^{t+1}-\ddot{\mathbf{w}}^{t+1}\|    \nonumber \\
\leq & \underset{\mathbf{I}}{\underbrace{\| {\mathbf{w}}^{t}-\ddot{\mathbf{w}}^{t}-\eta\sum_{i=m+1}^{n}(\alpha_{i}\|\mathbf{g}_{i}^{t}-\mathbf{h}_{i}^{t}\|)\| }} \nonumber \\
 & +\underset{\mathbf{II}}{\underbrace{\| \eta\sum_{i=m+1}^{n}\alpha_{i}\left[\mathbf{g}_{i}^{t}-\mathbf{H}_{i}^{t} ({\mathbf{w}}^{t} - \tilde{\mathbf{w}}^{t}) + \tilde{\mathbf{g}}^{t}_{i}\right]\| }}.
 \label{boundI1 I2_more}
\end{align}

It remains to bound terms $\mathbf{I}$ and $\mathbf{II}$, respectively.

For term $\mathbf{I}$, we obtain:
\begin{align} \mathbf{I}^2 &= \|{\mathbf{w}}^t - \ddot{\mathbf{w}}^t\|^2 - 2\eta \langle {\mathbf{w}}_t - \ddot{\mathbf{w}}_t, \sum_{i=m+1}^n \alpha_{i} (\mathbf{g}_i^t - \mathbf{h}_i^t) \rangle \nonumber \\ &\quad + \eta^2 \|\sum_{i=m+1}^n \alpha_{i} (\mathbf{g}_i^t - \mathbf{h}_i^t)\|^2,  \end{align}
where $\left\langle \cdot,\cdot\right\rangle$ represents the inner product of two vectors. 

By applying the Cauchy-Schwarz inequality, we derive: 
\begin{align} \mathbf{I}^2 &\le (\|{\mathbf{w}}^t - \ddot{\mathbf{w}}^t\|^2 - \eta \sum_{i=m+1}^n \alpha_{i} \langle {\mathbf{w}}^t - \ddot{\mathbf{w}}^t, \mathbf{g}_i^t - \mathbf{h}_i^t \rangle) \nonumber \\ &- \eta\sum_{i=m+1}^n \alpha_{i}\langle{\mathbf{w}}^t-\ddot{\mathbf{w}}^t,\mathbf{g}_i^t-\mathbf{h}_i^t\rangle  + \eta^2 \sum_{i=m+1}^n \alpha_{i}^2 \Vert\mathbf{g}_i^t-\mathbf{h}_i^t\Vert^2.
\end{align}

Given Assumption~\ref{assumption1} and Assumption~\ref{assumption2}, we see the loss function \( \mathcal{L}_i \) is
 \( \mu \)-strong convexity and \( L \)-smoothness. Utilizing inequalities for inner products, we obtain:
\begin{align} \mathbf{I}^{2} \leq &  (\|{\mathbf{w}}^{t}-\ddot{\mathbf{w}}^{t}\|^{2}-\eta\mu\|{\mathbf{w}}^{t}-\ddot{\mathbf{w}}^{t}\|^{2}) \nonumber \\ & -\eta\ \sum_{i=m+1}^{n}(\frac{\alpha_{i}}{L}-\frac{\eta}{\alpha_{i}^{2}})\|\mathbf{g}_{i}^{t}-\mathbf{h}_{i}^{t}\|^{2}.
\end{align}

If the learning rate \( \eta \) meets the condition \( \eta \leq \frac{1}{L} \leq \frac{|D^{\prime}|}{L \max_{i=m+1}^n |D_i|} \), we obtain: 
\begin{align}
    \mathbf{I}^2 \leq (1 - \eta \mu) \|{\mathbf{w}}^t - \ddot{\mathbf{w}}^t\|^2.
\end{align}  

Furthermore, we have: 
\begin{align} \mathbf{I} \leq \sqrt{1 - \eta \mu} \|{\mathbf{w}}^t - \ddot{\mathbf{w}}^t\|. 
\label{bound of I1} 
\end{align}

For term $\mathbf{II}$, we can derive an upper bound based on Assumption~\ref{assumption2} such that: 
\begin{align} 
\mathbf{II} \le \eta M. 
\label{bound of I2} 
\end{align} 

Substituting (\ref{bound of I1}) and (\ref{bound of I2}) into (\ref{boundI1 I2_more}), we have: 
\begin{align} \|{\mathbf{w}}^{t+1} - \ddot{\mathbf{w}}^{t+1}\| \le \mathbf{I} + \mathbf{II} \le \sqrt{1-\eta\mu}\|{\mathbf{w}}^t - \ddot{\mathbf{w}}^t\| + \eta M. 
\end{align}

In the following, we consider \textbf{Exact training} case, which occurs when $t< r$ and the conditions in Eq.~(\ref{first_check}) or Eq.~(\ref{second_check}) are not satisfied. 
Similar to the previous analysis, we obtain:
\begin{align}
    \|{\mathbf{w}}^{t+1} - \ddot{\mathbf{w}}^{t+1}\| = \|{\mathbf{w}}^t - \ddot{\mathbf{w}}^t - \eta \sum_{i=m+1}^n \alpha_{i} (\mathbf{g}_i^t - \mathbf{h}_i^t)\|.
\end{align} 

It is obvious that the RHS of above equation is what we defined earlier as $\mathbf{I}$. Hence, we have:
\begin{align} \Vert{\mathbf{w}}^{t+1}-\ddot{\mathbf{w}}^{t+1}\Vert\le\sqrt{1-\eta\mu}\Vert{\mathbf{w}}^{t}-\ddot{\mathbf{w}}^t\Vert. \end{align}

Combining two cases, we have: 
\begin{align} \Vert{\mathbf{w}}^{t+1}-\ddot{\mathbf{w}}^{t+1}\Vert\le\sqrt{1-\eta\mu}\Vert{\mathbf{w}}^{t}-\ddot{\mathbf{w}}^t\Vert+\eta M. 
\end{align}

By recursion, we have: 
\begin{align}
    \|{\mathbf{w}}^t - \ddot{\mathbf{w}}^t\| \leq (\sqrt{1-\eta\mu})^t\|{\mathbf{w}}^0 - \ddot{\mathbf{w}}^0\| + \frac{1-(\sqrt{1-\eta\mu})^t}{1-\sqrt{1-\eta\mu}}\eta M, 
\end{align} 
for any $t \geq 0$, where \({\mathbf{w}}^0\) and \(\ddot{\mathbf{w}}^0\) are the initial parameter values in the FU phase and train-from-scratch, respectively. This formulation concludes that under appropriate conditions on \(\eta\), the bound converges to \(\frac{\eta M}{1 - \sqrt{1 - \eta\mu}}\) as \(t \rightarrow \infty\), thereby confirming the stability of the learning algorithm.

\end{document}